\newtheorem{thm}{Theorem}
\newtheorem*{thm*}{Theorem}
\newtheorem{prop}{Proposition}
\newtheorem*{op}{Open Problem}
\theoremstyle{definition}
\newtheorem{defn}{Definition}
\newtheorem*{obs}{Observation}
\date{\today}
\author{Jacob Andreas}
\title{The Complexity of Learning Principles and Parameters
Grammars}
\begin{document}
\maketitle

\begin{abstract}
  We investigate models for learning the class of context-free and
  \linebreak context-sensitive languages (CFLs and CSLs).  We begin with a brief
  discussion of some early hardness results which show that unrestricted
  language learning is impossible, and unrestricted CFL learning is
  computationally infeasible;  we then briefly survey the literature on
  algorithms for learning restricted subclasses of the CFLs. Finally, we
  introduce a new family of subclasses, the principled parametric context-free
  grammars (and a corresponding family of principled parametric
  context-sensitive grammars), which roughly model the ``Principles and
  Parameters'' framework in psycholinguistics. We present three hardness
  results: first, that the PPCFGs are not efficiently learnable given
  equivalence and membership oracles, second, that the PPCFGs are not
  efficiently learnable from positive presentations unless $\mathsf{P} =
  \mathsf{NP}$, and third, that the PPCSGs are not efficiently learnable from
  positive presentations unless integer factorization is in {\sf P}.
  \end{abstract}
\newpage

\section{Introduction}

A great deal modern psycholinguistics has concerned itself with resolving the
problem of the so-called ``poverty of the stimulus''---the claim that natural
languages are unlearnable given only the data available to infants, and
consequently that some part of syntax must be ``native'' (i.e. prespecified)
rather than learned. Gold's theorem (described below), which states that there
exists a superfinite class of languages which is not learnable in the limit from
positive presentations, is often offered as proof of this fact (though the
extent to which the theorem is psycholinguistically informative remains a
contentious issue). \cite{gordon90}

But how is innate linguistic knowledge represented? One mechanism usually
offered is the Chomskian ``Principles and Parameters'' framework
\cite{chomsky93}, which suggests that there is a set of universal {\it
principles} of grammar which inhere in the structure of the brain. In this
framework, the process of language learning simply consists of determining
appropriate settings for a finite number of {\it parameters} which determine how
those principles are applied.

While this problem is generally supposed to be easier than unrestricted language
learning, we are not aware of any previous work specifically aimed at studying
the Principles and Parameters model in a computational setting. In this report,
we introduce a family of subclasses of the context-free languages which we
believe roughly captures the intuition behind the Principles and Parameters
model, and explore the difficulty of learning that model in various learning
environments.
 
We begin by presenting an extremely brief survey of the existing literature on
the hardness of language learning; we then introduce three hardness results, one
unconditional, one complexity-theoretic and one cryptographic, which suggest
that the existence of a generalized algorithm for learning in the principles and
parameters framework is highly unlikely. While we obviously cannot produce any
psychologically definitive results in this setting, we at least hope to
challenge the notion that the Principles and Parameters framework is somehow a
computationally satisfying explanation of the language learning process.

\section{Background}

\subsection{Definitions}

\subsubsection{Learnability in the limit}

Gold defines the language learning problem as follows:
\cite{gold}

\begin{defn}
  Given a class of languages $\mathcal{L}$ and an algorithm $A$, we say $A$
  {\bf identifies $\mathcal{L}$ in the limit from positive presentations} if $\forall
  L$, $\forall i_1, i_2, \cdots \in L$, there is a time $t$ such that for all $u
  > t$, $h_u = h_t = A(i_1, i_2, \cdots, i_t)$.
\end{defn}

\subsubsection{Exact identification using queries}
Modeling the language learning process as being entirely dependent on positive
examples seems rather extreme; it's useful to consider enviornments in which the
learner has access to a richer representation of the language.  Angluin
\cite{angluin90} describes a model of language learnability from oracle queries,
as follows:

\begin{defn}
  An {\bf equivalence oracle} for a language $L$ takes as input the
  representation of a language $r(L)$ and outputs ``true'' if $L = L^*$, or some
  $w \in L \Delta L^*$ (the symmetric difference of the languages) otherwise.
  There is an obvious equivalence, first pointed out by Littlestone
  \cite{littlestone88},
  between the equivalence query model and the online mistake bound model.
\end{defn}

\begin{defn}
  A {\bf membership oracle} for a language $L$ with start symbol $S$ takes a
  string $w$, and outputs true if $S \Rightarrow^* w$ and false otherwise.
\end{defn}

\begin{defn}
  A {\bf nonterminal membership oracle} for a language $L$ takes a string $w$ (not
  necessarily in $L$) and a nonterminal $A$, and outputs whether $A \Rightarrow^*
  w$
  (i.e. whether the set of possible derivations with $A$ as a start symbol
  includes $w$).
\end{defn}

\begin{defn}
  A class of languages $\mathcal{L}$ is {\bf learnable from an equivalence
  oracle} (or analogously from an equivalence oracle and a membership oracle,
  sometimes referred to as a ``minimal adequate teacher'') if there exists a
  learning algorithm with runtime polynomial in the size of the representation
  of the class and length of the longest counterexample.
\end{defn}

\subsection{Hardness of language learning}

\begin{thm*}[Gold]
  There exists a class of languages not learnable in the limit from positive
  presentations.
\end{thm*}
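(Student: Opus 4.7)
The plan is to exhibit a concrete class $\mathcal{L}$ that no algorithm can identify in the limit, using an adversarial (``staircase'') construction of a bad text. I will take $\mathcal{L}$ to be the classical superfinite class: fix a countable alphabet and let $\mathcal{L} = \{F : F \subseteq \Sigma^*, F \text{ finite}\} \cup \{\Sigma^*\}$. Every member is a legitimate formal language, and the class contains at least one infinite language together with all of its finite subsets, which is the only structural feature the argument will use.

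Suppose for contradiction that some algorithm $A$ identifies $\mathcal{L}$ in the limit. I will build, stage by stage, an infinite presentation $\sigma = i_1, i_2, \ldots$ together with a target language $L^\star \in \mathcal{L}$ on which $A$ fails. Fix an enumeration $e_1, e_2, \ldots$ of $\Sigma^*$. In odd stages, I extend the current prefix by appending the next unused $e_j$'s (one at a time) and querying $A$ after each extension, stopping as soon as $A$'s current hypothesis encodes $\Sigma^*$. In even stages, I let $F$ be the finite set of strings enumerated so far, and I extend the prefix by repeating elements of $F$ one at a time, stopping as soon as $A$'s hypothesis encodes $F$.

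The case analysis now closes the argument. If some odd stage never terminates, then the resulting infinite text lists every $e_j$ (hence presents $L^\star = \Sigma^*$), yet $A$ never stabilizes on $\Sigma^*$, contradicting identification in the limit. If some even stage never terminates, the resulting text presents $L^\star = F$ (each element of $F$ appears infinitely often and no new strings are introduced), yet $A$ never stabilizes on $F$, again a contradiction. Otherwise every stage terminates; by insisting that odd stage $2k{+}1$ consume at least $e_k$, the final text enumerates all of $\Sigma^*$, so it is a presentation of $L^\star = \Sigma^*$, but $A$'s hypothesis changes at the end of every stage and therefore fails to converge. In each case we contradict the assumption that $A$ identifies $\mathcal{L}$ in the limit.

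The main obstacle is bookkeeping rather than any deep idea: one must be careful that the inductively constructed text is simultaneously (i) a syntactically valid positive presentation of whichever $L^\star \in \mathcal{L}$ ends up being the target, and (ii) guaranteed to force $A$ into either divergence or an incorrect limit. The staged construction above is designed precisely so that these two requirements can be verified uniformly in the three terminating/non-terminating cases.
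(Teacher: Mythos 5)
Your proof is correct and follows essentially the same diagonalization as the paper's sketch: both exhibit a superfinite class (an infinite language together with infinitely many of its finite sublanguages) and construct an adversarial text, stage by stage, that forces the learner either to change its hypothesis infinitely often or to stabilize on the wrong language. Your version is somewhat more careful than the paper's---making the three termination/non-termination cases explicit where the sketch leaves them implicit---but the underlying idea is identical.
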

\begin{proof}[Proof sketch]
  Construct an infinite sequence of languages $L_1 \subset L_2 \subset \cdots $,
  all finite, and let $L_\infty = \bigcup_i L_i$. Suppose there existed some
  algorithm $A$ that could identify each $L_i$ from positive presentations. Then
  there is a positive presentation of $L_\infty$ that causes $A$ to make an
  infinite number of mistakes. First present a set of examples, all in $L_1$,
  that force $A$ to identify $L_1$. Then present a set of examples forcing it to
  identify $L_2$, then $L_3$, and so on. An infinite number of mistakes can be
  forced in this way, so $L_\infty$ is not learnable in the limit.
\end{proof}

While space does not permit us to discuss the proof here, we also note the
following important result for CFL learning:

\begin{thm*}[Angluin \cite{angluin80}]
    There exists a class of context-free languages with ``natural''
    representations which are not learnable from equivalence queries in time
    polynomial in the size of the representation.
\end{thm*}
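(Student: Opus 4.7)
The plan is to apply Angluin's \emph{approximate fingerprint} technique, a standard adversary method for proving equivalence-query lower bounds. The goal is to exhibit a family $\mathcal{C}_n$ of context-free languages---each with a CFG representation of size polynomial in $n$, so ``naturally'' represented---that is large enough and structurally uniform enough that no single counterexample can shrink the learner's version space by more than a $1/p(n)$ multiplicative factor, where $p(n)$ is super-polynomial.

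First I would construct the class. A natural candidate is a family $\{L_S\}$ indexed by subsets $S$ drawn from an exponentially large combinatorial universe, chosen so that $|\mathcal{C}_n|$ is super-polynomial in $n$ while each $L_S$ still admits a CFG of polynomial size (via, for example, a sparse explicit representation, or a grammar whose productions implement a succinct membership test for $S$). The key quantitative target is $\log|\mathcal{C}_n| = \omega(\log n)$, so that the counting argument below bites.

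Next I would verify the approximate fingerprint property: for every polynomial-size CFG hypothesis $H$ there exists a short string $w$ such that $\{L_S \in \mathcal{C}_n : L_S(w) \ne H(w)\}$ contains all but a $1/p(n)$ fraction of $\mathcal{C}_n$. Intuitively, a polynomial-size CFG has only polynomially many ``distinguishable'' acceptance patterns, so on most strings $H$ must agree with the majority behaviour of $\mathcal{C}_n$; the adversary uses such a $w$, answering opposite to $H$, and thereby preserves nearly the whole version space. With this in hand the adversary argument is routine: initialize $V_0 = \mathcal{C}_n$; on each equivalence query $H_i$, return the fingerprint $w_i$ as counterexample and put $V_{i+1} = \{L_S \in V_i : L_S(w_i) \ne H_i(w_i)\}$, so that $|V_{i+1}| \ge (1 - 1/p(n))|V_i|$. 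Forcing $|V_T| = 1$ therefore requires $T = \Omega(p(n)\log|\mathcal{C}_n|)$, which is super-polynomial, ruling out any polynomial-time learner.

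The main obstacle will be the fingerprint property itself. Polynomial-size CFGs can exhibit quite intricate acceptance patterns, so one has to prove by a careful combinatorial argument---essentially a pigeonhole over derivation structures---that any such $H$ must ``agree with the majority'' of $\mathcal{C}_n$ on some short string, uniformly in $H$. This step is sensitive to the choice of encoding $S \mapsto L_S$, and getting it to work simultaneously for every large sub-version-space $V \subseteq \mathcal{C}_n$ (not just $V = \mathcal{C}_n$) is the technical heart of the proof; everything else is packaging.
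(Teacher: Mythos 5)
The paper does not actually prove this theorem---it explicitly defers to Angluin's cited work (``space does not permit us to discuss the proof here'')---so there is no in-paper argument to compare yours against. Your choice of technique is the right one: Angluin's equivalence-query lower bounds are indeed proved via approximate fingerprints, and your adversary-plus-counting packaging (each counterexample eliminates at most a $1/p(n)$ fraction of the version space, hence superpolynomially many queries are needed to isolate a single language) is the standard and correct skeleton.

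As a proof, however, the proposal has a genuine gap, and in one place the intuition points the wrong way. First, the entire content of the theorem is the construction you leave unspecified: an exponentially large family of languages, each with a polynomial-size CFG, satisfying the fingerprint property. Without exhibiting such a family there is nothing to verify; the counting argument is, as you yourself say, packaging. Second, your justification of the fingerprint property is backwards. You correctly state what is needed---for every polynomial-size hypothesis $H$ there is a short $w$ on which all but a $1/p(n)$ fraction of the class \emph{disagrees} with $H$, i.e.\ $H$'s answer on $w$ is an extreme-minority answer---but you then motivate it by saying that $H$ ``must agree with the majority behaviour of $\mathcal{C}_n$'' on most strings and that the adversary uses such a $w$, ``answering opposite to $H$.'' If $H$ agrees with the majority of the class on $w$, then answering opposite to $H$ retains only the minority and collapses the version space, which is precisely what the adversary must avoid. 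What actually has to be shown is that no polynomial-size CFG can track the (suitably weighted) majority vote of the class on every short string; arranging the encoding $S \mapsto L_S$ so that this holds---and holds relative to every large surviving sub-version-space, not just the initial one---is the theorem. You flag this as ``the technical heart,'' which is honest, but it means the proposal is a plan rather than a proof.
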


\subsection{Learnable subclasses of the CFLs}

While this last result rules out the possibility of a general algorithm for
learning CFLs, subsets of the CFLs have been shown to be learnable when given
slightly more powerful oracles. These include simple deterministic languages
\cite{ishizawa90}, one-counter languages \cite{berman87} and so-called very
simple languages \cite{yokomori91}.
Particularly heartening is Angluin's result that $k$-bounded CFGs can be
learned in polynomial time if nonterminal membership queries are permitted
\cite{angluin87}.

\section{Principled Parametric Grammars}

We now introduce a formal model of the ``principles and parameters'' framework
described in the introduction.

\subsection{Motivation}

Before moving on to the details of the construction, it's useful to consider a
few example ``principles'' and ``parameters'' suggested by proponents of the
model.

\begin{itemize}
  \item 
    {\bf The pro-drop parameter}: does this language allow pronoun dropping? If
    {\tt PNP} is a non-terminal symbol designating a pronoun, this parameter
    determines whether or not a rule of the form $\mathtt{PNP} \rightarrow
    \varepsilon$ exists in the language.
  \item
    {\bf The ergative/nominative parameter}: ergative languages distinguish
    between transitive and intransitive senses of verb by marking the subject,
    while nominative languages (like English) mark the object. Let $\mathtt{NP}$
    and $\mathtt{VP}$ be non-terminal symbols for noun and verb phrases
    respectively, and let $\mathtt{NP_{trans}}$ and $\mathtt{VP_{trans}}$ be
    distinguished versions of those symbols for ergative/nominative marking.
    Now, any language with Verb-Subject-Object order, there will be a rule
    $\mathtt{S} \rightarrow \mathtt{NP}\ \mathtt{VP}$. In an ergative language,
    there is additionally a rule of the form $\mathtt{S} \rightarrow
    \mathtt{NP_{trans}}\ \mathtt{VP}$, and in a nominative language a rule of
    the form $\mathtt{S} \rightarrow \mathtt{NP}\ \mathtt{VP_{trans}}$.
\end{itemize}
In each of these cases, a pattern holds: for every possible possible parameter
setting, there is some finite set of context-free productions in the native
grammar, from which only one must be selected as the element of the learned
grammar. This leads very naturally to the following development of principled
parametric context-free grammars as a model of the principles and parameters
model.

\subsection{Construction}

\begin{defn}
  An {\bf $\mathbf{n}$-principled, $\mathbf{k}$-parametric context-free grammar}
  \linebreak ($(n,k)$-PPCFG) $\Gamma$ is a 4-tuple $(V, \Sigma, \Pi, S)$, where:
  \begin{enumerate}
    \item $V$ is a finite alphabet of nonterminal symbols
    \item $\Sigma$ is a finite alphabet of terminal symbols
    \item $\Pi$ is a set of $n$ production groups of the form
      \[ (A_{i,1} \rightarrow \alpha_{i,1}), \cdots, (A_{i,j} \rightarrow
      \alpha_{i,j}), \cdots, (A_{i,k} \rightarrow \alpha_{i,k}) \]
      where each $\alpha \in (V \cup \Sigma)^*$, i.e. is a finite sequence of
      terminals and nonterminals. Let $\Pi_{i,j}$ denote the production $(A_{i,j}
      \rightarrow \alpha_{i,j})$.
    \item $S \in V$ is the start symbol.
  \end{enumerate}
\end{defn}

\begin{defn}
  A {\bf parameter setting} $p = (p_1, p_2, \cdots, p_n)$ is a sequence of length
  $n$, with each $p_i \in 1..k$. Then define $\Gamma_p$ to be the ordinary
  context-free grammar ($V$, $\Sigma$, $R$, $S$) with $R = \{\Pi_{i,p_i} : i \in
  1..n\}$.
\end{defn}

As usual, let $L(G)$ denote the context free language represented by the CFG
$G$. Then let $\Lambda(\Gamma) = \{ L(G) : \exists p : G = \Gamma_p \}$. 

\begin{defn}
  An algorithm $A$ {\bf learns the PPCFGs from an equivalence oracle} if
  $\forall$ PPCFGs $\Gamma$ and languages $l \in \Gamma$, after a finite number
  of oracle queries, $A$ outputs some $p$ such that $L(\Gamma_p) = l$, or
  determines that no such $p$ exists.
\end{defn}

\begin{defn}
  $A$ {\bf efficiently} learns the PPCFGs from an equivalence oracle if the
  number of oracle queries it makes is bounded by some polynomial function
  $poly(n,k)$.
\end{defn}

\begin{defn}
  Finally, a {\bf principled parametric context-sensitive grammar} is defined
  exactly as above, with corresponding learning definitions, but with
  context-sensitive productions in each production group.
\end{defn}

\subsection{Equivalence}

Some useful facts about the PPCFGs:

\begin{obs}
  A ``heterogeneous PPCFG'' with a variable number of right hand sides can be
  transformed into a ``homogeneous PPCFG'' of the kind described above by
  ``padding'' out the shorter principles with duplicate rules (i.e. to insert an
  unambiguous production $A \rightarrow \alpha$ into an $(n,2)$-PPCFG, add to
  $\Pi$ the production group $(A \rightarrow \alpha), (A \rightarrow \alpha)$).
\end{obs}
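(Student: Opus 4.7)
The plan is to give an explicit transformation and verify that it preserves the family $\Lambda(\cdot)$ of generated languages. Let $\Gamma$ be a heterogeneous PPCFG whose $i$th production group has size $k_i$, and set $K = \max_i k_i$. I would construct the homogeneous $(n,K)$-PPCFG $\Gamma'$ by, for each $i$ with $k_i < K$, appending $K - k_i$ additional copies of the last rule $\Pi_{i,k_i}$ so that the padded group has the form
\[
(A_{i,1}\to\alpha_{i,1}),\ \ldots,\ (A_{i,k_i}\to\alpha_{i,k_i}),\ (A_{i,k_i}\to\alpha_{i,k_i}),\ \ldots,\ (A_{i,k_i}\to\alpha_{i,k_i}).
\]
All other components ($V$, $\Sigma$, $S$) are left untouched.

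Next I would show $\Lambda(\Gamma) = \Lambda(\Gamma')$ by exhibiting a two-way correspondence of parameter settings. In the forward direction, any setting $p$ for $\Gamma$ (with $p_i \in 1..k_i$) is already a valid setting for $\Gamma'$, and because padding only adds entries at indices strictly greater than $k_i$, the selected rule $\Pi'_{i,p_i}$ equals $\Pi_{i,p_i}$; thus $\Gamma'_p = \Gamma_p$ as ordinary CFGs, and they generate the same language. In the reverse direction, given a setting $p'$ for $\Gamma'$ (with $p'_i \in 1..K$), define $p_i = p'_i$ when $p'_i \le k_i$ and $p_i = k_i$ otherwise. By construction of the padding, $\Pi'_{i,p'_i} = \Pi_{i,p_i}$ in both cases, so $L(\Gamma'_{p'}) = L(\Gamma_p)$.

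There is no real obstacle here; the only point that requires care is recognizing that when $p'$ selects a ``duplicate'' slot the resulting rule is literally the same production that would be chosen at index $k_i$, so the rule set $R$ defining the selected CFG is unchanged rather than merely equivalent. Once this is noted, the two containments $\Lambda(\Gamma) \subseteq \Lambda(\Gamma')$ and $\Lambda(\Gamma') \subseteq \Lambda(\Gamma)$ are immediate from the correspondence above, establishing the observation.
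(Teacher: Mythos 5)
Your construction is exactly the padding-with-duplicates idea the paper embeds in the statement of the observation itself (the paper offers no further proof), and your two-way correspondence of parameter settings is the routine verification that this padding preserves $\Lambda(\cdot)$. The argument is correct and matches the paper's approach.
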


\begin{obs}
  A $(n,k)$-PPCFG can be converted into an $(n(k-1),2)$-PPCFG as follows:
  replace each principle
  \[ A \rightarrow (\alpha_1, \alpha_2, \dots, \alpha_k) \]
  with a set of principles
  \begin{align*}
    A_1 &\rightarrow (\alpha_1, A_2) \\
    A_2 &\rightarrow (\alpha_2, A_3) \\
    &\vdots \\
    A_{k-1} &\rightarrow (\alpha_{k-1}, \alpha_k)
  \end{align*}
\end{obs}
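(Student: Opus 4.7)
The plan is to verify that the given replacement procedure produces an $(n(k-1), 2)$-PPCFG $\Gamma'$ whose language family $\Lambda$ coincides with that of the original $(n,k)$-PPCFG $\Gamma$. Two things must be checked. First, the counting is immediate: each of the $n$ original width-$k$ principles is replaced by $k-1$ binary principles, giving $n(k-1)$ principles of width $2$. Second, I need to establish $\Lambda(\Gamma') = \Lambda(\Gamma)$, which I will split into the two inclusions. Throughout I will identify $A_1$ with the original left-hand side $A$ and assume $A_2, \dots, A_{k-1}$ are fresh nonterminals that appear nowhere else in $\Gamma'$.

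For $\Lambda(\Gamma) \subseteq \Lambda(\Gamma')$, take an original parameter vector $p$ and suppose $p_i = j$, i.e.\ the $j$-th alternative is chosen in principle $i$. I build $p'$ that, within the chain of binary principles replacing principle $i$, selects the second alternative ($A_{m+1}$) for every $m < j$ and the first alternative ($\alpha_j$) at position $m = j$; when $j = k$ the chain simply walks all the way down and selects $\alpha_k$ at the terminal binary principle $A_{k-1} \to (\alpha_{k-1}, \alpha_k)$. Choices for $m > j$ can be made arbitrarily. A one-line induction then gives $A_1 \Rightarrow^*_{\Gamma'_{p'}} \alpha_j$, so after erasing the unreachable chain tail we obtain a grammar which, modulo renaming, is exactly $\Gamma_p$.

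For the reverse inclusion $\Lambda(\Gamma') \subseteq \Lambda(\Gamma)$, take any $p'$ and, for each original principle $i$, define $j$ to be the least index at which the first alternative is selected in the corresponding chain (taking $j = k$ if the second alternative is chosen at every intermediate step and $\alpha_k$ is chosen at the final binary principle). Because each $A_m$ with $m \ge 2$ occurs only on the right-hand side of the immediately preceding binary principle, the nonterminals $A_{j+1}, \dots, A_{k-1}$ are unreachable from $S$, and the reachable fragment contributes precisely the single rule $A \to \alpha_j$ from the original principle. Defining $p_i = j$, this yields $L(\Gamma'_{p'}) = L(\Gamma_p) \in \Lambda(\Gamma)$.

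The only delicate point is the unreachability argument that justifies ignoring the tail of each chain; this rests entirely on freshness of the introduced $A_2, \dots, A_{k-1}$ and on the fact that each chain is used as a drop-in replacement for exactly one original principle. Once those bookkeeping constraints are set up correctly, both directions of the equivalence are routine, and the observation follows.
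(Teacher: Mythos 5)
Your proof is correct; the paper states this as an unproved observation, and the argument you give (counting plus the two inclusions of $\Lambda(\Gamma') = \Lambda(\Gamma)$, with the chain-of-fresh-nonterminals encoding and the unreachability of the tail) is exactly the intended justification. One small imprecision worth fixing: after pruning the unreachable tail, the reachable fragment is not literally $\Gamma_p$ modulo renaming --- it still contains the unit productions $A_1 \rightarrow A_2 \rightarrow \cdots \rightarrow A_j \rightarrow \alpha_j$ --- but since these only interpose a chain of fresh nonterminals between $A$ and $\alpha_j$, the generated language is unchanged, which is all that is needed for equality of $\Lambda$.
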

Thus without loss of generality we may treat every PPCFG as a $(n,2)$-PPCFG. The
conversion above results in only a polynomial increase in the number of
principles, so any algorithm which is polynomial in $n$, and which assumes
$k=2$, can be used to solve $k > 2$ with only a polynomial increase in running
time. This also means that we may specify an individual language in a PPCFG by a
bit string of length $n$.

Finally, note that a $k$-PPCFG with $n$ rules contains at most $k^n$ languages.

\section{Generic hardness results for PPCFGs}

We will construct a minimal adequate teacher $T$ consisting of two oracles $EQ$
(an equivalence oracle) and $M$ (a membership oracle), such that any algorithm
$A$ requires an exponential number of queries to identify the correct parameter
setting $p$ from a PPCFG $\Gamma$.

\begin{thm}
  Without condition, there exists no algorithm $A$ capable of learning the
  PPCFGs from equivalence queries and membership queries in polynomial time.
\end{thm}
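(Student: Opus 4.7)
The plan is to exhibit an adversarial $(n,2)$-PPCFG $\Gamma$ together with a strategy for the minimal adequate teacher $T = (EQ, M)$ that forces any learner to make at least $2^n - 1$ queries to pin down the correct parameter setting. Since the representation size of $\Gamma$ is polynomial in $n$, this rules out polynomial-time learning.

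First, I would construct $\Gamma$ so that each parameter setting yields a distinct singleton language. Take nonterminals $A_1,\ldots,A_n$ with start symbol $A_1$, terminal alphabet $\{0,1\}$, and, for each $i < n$, the principle $(A_i \to 0\,A_{i+1},\ A_i \to 1\,A_{i+1})$, together with a final principle $(A_n \to 0,\ A_n \to 1)$. For any parameter setting $p = (p_1,\ldots,p_n)$ the grammar $\Gamma_p$ has a unique derivation producing the terminal string $p_1 p_2 \cdots p_n$, so $L(\Gamma_p) = \{p_1 \cdots p_n\}$ and the $2^n$ languages are pairwise distinct.

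Second, I would describe the adversary via a \emph{viable set} $V \subseteq \{0,1\}^n$ of parameter settings still consistent with every previous response; initially $V = \{0,1\}^n$. The invariant maintained is that for every $p^* \in V$, all prior answers are consistent with the target being $L(\Gamma_{p^*})$. On a membership query $w$, the adversary answers ``no''; this is consistent with every $p^* \in V \setminus \{w\}$, so $|V|$ decreases by at most one. On an equivalence query with hypothesis $L^* = L(\Gamma_p) = \{p\}$, if $|V| \ge 2$ the adversary returns $p$ itself as a counterexample, implicitly asserting $p \in L^* \setminus L(\Gamma_{p^*})$, which is consistent with every $p^* \in V \setminus \{p\}$; again $|V|$ shrinks by at most one. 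Only when $|V| = 1$ is the adversary ever forced to commit to the truth.

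Third, a straightforward induction on the number of queries $q$ shows $|V| \geq 2^n - q$, so the learner cannot identify a unique $p$ (nor correctly conclude that no such $p$ exists, since $V$ is always nonempty) until $q \geq 2^n - 1$. As this exceeds every polynomial in $n$, no algorithm satisfies the efficiency definition.

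The main obstacle is justifying that in an equivalence query the learner's hypothesis $L^*$ may be taken to be one of the $2^n$ PPCFG languages rather than an arbitrary CFL. If $L^*$ could be any CFL, then the query $L^* = \emptyset$ would force the adversary to return the target string as its only possible counterexample, collapsing $V$ to a singleton in one step and destroying the argument. I would address this by appealing to the learnability definition for PPCFGs, which requires the learner ultimately to output a parameter setting $p$; the natural and standard reading is that its hypotheses during learning are drawn from the same class, i.e.\ are themselves of the form $L(\Gamma_p)$. Under this restriction, the counterexample $p$ used above is always available and consistent with every $p^* \in V \setminus \{p\}$, and the lower bound goes through.
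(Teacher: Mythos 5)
Your proposal is correct and follows essentially the same strategy as the paper: a PPCFG whose $2^n$ parameter settings generate $2^n$ distinct singleton languages, together with an adversary that answers every query negatively so that each query eliminates at most one candidate setting, forcing $2^n-1$ queries. Your explicit viable-set invariant and your discussion of why equivalence queries must be restricted to hypotheses of the form $L(\Gamma_p)$ are slightly more careful than the paper's presentation, but the underlying argument is the same.
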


\begin{proof}
  Fix some number $N$. Construct the PPCFG $\Gamma$ with
  \begin{align*}
    V &= X_i : i \in 1..N \\
    S &= \mathtt{START} \\
    \Sigma &= \{ 0, 1 \}
  \end{align*}
  and $\Pi$ as defined as follows:
  \begin{align*}
    &(\mathtt{START} \rightarrow X_1 X_2 \cdots X_N) \\
    &(X_k \rightarrow 0, X_k \rightarrow 1) && \forall k \in 1..N
  \end{align*}

  Every parameter setting $p$ in this grammar allows it to derive precisely 1
  string: every production is deterministic. Consequently, the $N$ possible
  settings of the grammar derive $2^N$ unique strings. Given some algorithm $A$
  for learning PPCFGs, the procedure specified below describes an adversarial
  distinguisher for this PPCFG which forces the learner to make a total of $2^N
  - 1$ queries.

  \begin{algorithm}
    \algblockdefx{Query}{EndQuery}[1][]{{\bf on query} }{{\bf end query}}
    \begin{algorithmic}
      \State $i \gets 0$
      \While{$i < 2^{N}-1$}
        \Query{$EQ(\Gamma')$}
          \If{$\Gamma'$ has not been previously queried}
            \State $i \gets i + 1$
          \EndIf
          \State \Return FALSE, $L(\Gamma')$
          \Comment $L(\Gamma')$ contains only one string
        \EndQuery

        \Query{$M(w)$}
          \If{$w$ has not been previously queried}
            \State $i \gets i + 1$
          \EndIf
          \State \Return FALSE
        \EndQuery
      \EndWhile

      \Query{}
        \State \Return TRUE
        \Comment Only one language is consistent with the evidence
      \EndQuery
    \end{algorithmic}
  \end{algorithm}

  After each query, the number of grammars still possible given the evidence
  provided so far decreases by precisely 1 (because each grammar is capable of
  producing only string), so after $2^N - 1$ queries of either kind, the oracle
  must output true.\\
  
  Thus, only after $2^N - 1$ queries (superpolynomial in $|\Gamma|$ and the
  length of the longest production) can the learner halt, so the grammar is not
  efficiently learnable from membership and equivalence queries.
\end{proof}

\section{Complexity-theoretic hardness results for \\ PPCFGs}

We will construct a reduction from 3SAT to PPCFG learning. Let $X = \{ x_i \}$ be
a set of variables and $C = \{ c_i \}$ be a set of clauses. Let us write $x_j
\in c_i$ if the $j$th $i$th clause is satisfied by the $j$th variable, and
$\bar{x}_j \in c_i$ if the $i$th clause is satisfied by the negation of the
$j$th variable.

Then construct the PPCFG $\Gamma$ with $V = X \cup \{\mathtt{START}\}, \Sigma =
C, S = \mathtt{START}$, and $\Pi$ with the following production groups:
\begin{align*}
  &(\mathtt{START} \rightarrow x_1x_2\cdots x_n) \\
  &(x_i \rightarrow x_{i,T}), (x_1 \rightarrow x_{i,F}) && \forall x_i \in X \\
  &(x_i \rightarrow \varepsilon) && \forall x_i \in X \\
  &(x_{j,T} \rightarrow c_i) && \forall c_i \in C, \forall x_j \in c_i \\
  &(x_{j,F} \rightarrow c_i) && \forall c_i \in C, \forall \bar{x}_j \in c_i
\end{align*}

Note that only for production groups of the form $(x_i \rightarrow x_{i,T}),
(x_1 \rightarrow x_{i,F})$ does the parameter setting change the resulting
language. These groups may be thought of as assigning truth values to the
variables.

\begin{prop}
  \label{l_implies_sat}
  If there exists some $l \in \Gamma$ such that $\forall c_i \in C: c_i \in l$,
  then the 3SAT instance is satisfiable.
\end{prop}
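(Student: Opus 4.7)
The plan is to show that any parameter setting $p$ for which $L(\Gamma_p)$ contains every clause symbol $c_i$ can be read off directly as a satisfying truth assignment, with $p_i = T$ interpreted as $x_i$ true and $p_i = F$ as $x_i$ false. The argument splits into a characterization of $L(\Gamma_p)$ and a characterization of when a length-one string $c_i$ belongs to it.

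First I would describe $L(\Gamma_p)$ for an arbitrary parameter setting. Every derivation begins with the unique rule $\mathtt{START} \to x_1 x_2 \cdots x_n$. Because the production group $(x_i \to \varepsilon)$ is present in every $\Gamma_p$ (after padding, per the heterogeneous-to-homogeneous observation), each occurrence of a nonterminal $x_i$ can always be erased. The only other way to rewrite $x_i$ is through the chosen rule from the group $(x_i \to x_{i,T}),(x_i \to x_{i,F})$: if $p_i$ selects $x_{i,T}$, then $x_i$ derives precisely the clause symbols $c_j$ with $x_j \in c_i$ as a positive literal (i.e.\ those $c_j$ for which $x_i \in c_j$); symmetrically if $p_i$ selects $x_{i,F}$. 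Since none of these productions reintroduce any $x_k$, derivations proceed component-by-component, and $L(\Gamma_p)$ is exactly the set of concatenations $\sigma_1 \sigma_2 \cdots \sigma_n$ where each $\sigma_i$ is either $\varepsilon$ or a single $c_j$ satisfied by the literal that $p_i$ assigns to $x_i$.

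Second I would specialize this to length-one strings. The characterization above tells us that $c_j \in L(\Gamma_p)$ if and only if there is some index $i$ with $\sigma_i = c_j$ and $\sigma_k = \varepsilon$ for all $k \neq i$. That holds exactly when some variable $x_i$ contributes the literal witnessing $c_j$ under the assignment encoded by $p$, i.e.\ when the truth assignment corresponding to $p$ satisfies clause $c_j$. Therefore, if $l = L(\Gamma_p)$ with $\{c_1,\ldots,c_m\} \subseteq l$, then every clause is satisfied by this assignment, proving that the 3SAT instance is satisfiable.

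The only step where care is needed is the $L(\Gamma_p)$ characterization: I must be explicit that (i) the $x_i \to \varepsilon$ rule is always available, (ii) no production rewrites $x_{i,T}$ or $x_{i,F}$ into anything but a single terminal $c_j$, and (iii) no rule ever produces a nonterminal $x_k$ on the right-hand side except $\mathtt{START}$'s, so the derivation really does decompose into $n$ independent expansions. Once these structural facts are in place, the equivalence between ``$c_j$ is derivable'' and ``$p$ satisfies $c_j$'' is immediate, and the proposition follows.
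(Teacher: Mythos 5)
Your proposal is correct and follows essentially the same route as the paper: read the truth assignment off the parameter setting ($p_i$ selecting $x_{i,T}$ means $x_i$ is true), then argue that any derivation of a clause symbol $c_j$ must factor through some $x_i \Rightarrow x_{i,a} \Rightarrow c_j$, which forces the literal chosen for $x_i$ to satisfy $c_j$. The only difference is one of rigor --- you explicitly characterize $L(\Gamma_p)$ and justify why derivations decompose variable-by-variable, whereas the paper simply asserts the form of the derivation --- but the underlying argument is identical.
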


\begin{proof}
  Set $x_i$ true if the rule $x_i \rightarrow x_{i,T}$ is chosen, and false
  otherwise. For any $c_i$ in the language, there is a derivation from
  $\mathtt{START} \Rightarrow^* c_i$ of the following form:
  \begin{align*}
    \mathtt{START} &\Rightarrow x_1 x_2 \cdots x_n \\
    &\Rightarrow x_j \\
    &\Rightarrow x_{j,a} \\
    &\Rightarrow c_i
  \end{align*}
  Then $x_j$ satisfies $c_i$.
\end{proof}

\begin{prop}
  \label{sat_implies_l}
  If the 3SAT instance is satisfiable, there exists some $l \in \Gamma$ such
  that $\forall c_i \in C: c_i \in l$.
\end{prop}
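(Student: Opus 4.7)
The plan is to give a direct construction: from any satisfying assignment $\sigma$ of the 3SAT instance, I will exhibit a parameter setting $p$ such that $L(\Gamma_p)$ contains every clause symbol $c_i \in C$ as a single-letter string.

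Concretely, I would define $p$ by selecting, in each production group of the form $(x_i \rightarrow x_{i,T}), (x_i \rightarrow x_{i,F})$, the first option when $\sigma(x_i) = \text{true}$ and the second option when $\sigma(x_i) = \text{false}$. The remaining production groups—the single $\mathtt{START}$ rule, the $x_i \rightarrow \varepsilon$ rules, and the $x_{j,T} \rightarrow c_i$ and $x_{j,F} \rightarrow c_i$ rules—have only one setting (up to the padding from the first observation in Section 3.3), so $p$ is fully determined by $\sigma$. Set $l := L(\Gamma_p)$.

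Next I would verify that $c_i \in l$ for every clause $c_i$. Since $\sigma$ satisfies $c_i$, there is some literal in $c_i$ that evaluates to true under $\sigma$; say it is $x_j$ (the case of $\bar{x}_j$ is symmetric). Then $\sigma(x_j) = \text{true}$, so $\Gamma_p$ contains $x_j \rightarrow x_{j,T}$, and because $x_j \in c_i$ the rule $x_{j,T} \rightarrow c_i$ also lies in $\Gamma_p$. I would then exhibit the derivation
\begin{align*}
\mathtt{START} &\Rightarrow x_1 x_2 \cdots x_n \\
 &\Rightarrow^* x_j \\
 &\Rightarrow x_{j,T} \\
 &\Rightarrow c_i,
\end{align*}
where the second step uses the $x_i \rightarrow \varepsilon$ rules to erase every variable other than $x_j$. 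This witnesses $c_i \in L(\Gamma_p) = l$, completing the argument.

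There is no real obstacle here; the only thing to be careful about is that the $x_i \rightarrow \varepsilon$ productions really are available in $\Gamma_p$ regardless of the truth assignment, which follows because those production groups are singletons (or padded duplicates), and that the $x_{j,T} \rightarrow c_i$ (resp.\ $x_{j,F} \rightarrow c_i$) productions are likewise independent of $p$. Both points are immediate from the construction of $\Gamma$ given just before Proposition~\ref{l_implies_sat}.
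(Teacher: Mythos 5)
Your proof is correct and follows essentially the same route as the paper's: select $x_i \rightarrow x_{i,T}$ or $x_i \rightarrow x_{i,F}$ according to the satisfying assignment, then for each clause $c_i$ use a satisfying literal to produce the derivation $\mathtt{START} \Rightarrow x_1 \cdots x_n \Rightarrow^* x_j \Rightarrow x_{j,a} \Rightarrow c_i$. You are slightly more explicit than the paper in verifying that the $\varepsilon$-productions and the $x_{j,a} \rightarrow c_i$ productions are available independently of the parameter setting, which is a worthwhile clarification but not a different argument.
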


\begin{proof}
  Choose the rule $(x_i \rightarrow x_{i,T})$ if $x_i$ is set true in the
  satisfying assignment, and $(x_i \rightarrow x_{i,F})$ if $x_i$ is set false.
  These settings determine $l$. Then, consider any string $c_i$. There is some
  variable $x_j$ with truth value $a$ which satisfies the corresponding clause;
  then by assignment $l$ contains a production of the form $x_j \rightarrow
  x_{j,a}$, and by definition contains a production of the form $x_{j,a}
  \rightarrow c_i$, so derivation identical to the one in the previous
  proposition must exist.
\end{proof}

\begin{thm}
  \label{complex_ppcfg}
  If $\mathsf{P} \neq \mathsf{NP}$, no efficient algorithm exists for
  learning PPCFGs from positive presentations.
\end{thm}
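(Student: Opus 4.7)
The plan is to combine Propositions \ref{l_implies_sat} and \ref{sat_implies_l} into a polynomial-time many-one reduction from 3SAT to the ``consistency'' problem for PPCFG languages. Those two propositions together establish that a 3SAT instance $(X, C)$ is satisfiable if and only if some $l \in \Lambda(\Gamma)$ contains all the clauses $c_1, \ldots, c_m$, where $\Gamma$ is the PPCFG constructed from $(X, C)$ earlier in this section. Since $\Gamma$ is built in polynomial time, the decision problem ``does some parameter setting of $\Gamma$ produce a language containing the given finite set $C$?'' is already NP-hard.

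Given this, I would use an efficient positive-presentation learner $A$ for the PPCFGs as a subroutine for 3SAT. On input $(X, C)$, construct $\Gamma$ and feed $A$ the positive example stream $c_1, c_2, \ldots, c_m$ (padded with repetitions if the definition of efficiency demands a longer stream). By the efficiency hypothesis $A$ halts in polynomial time with some parameter setting $p$, and we can verify in polynomial time, using a standard CYK membership test once per clause, whether $\{c_1, \ldots, c_m\} \subseteq L(\Gamma_p)$. Output SAT iff every check succeeds.

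For correctness, the UNSAT direction is immediate from the contrapositive of Proposition \ref{l_implies_sat}: if no $l \in \Lambda(\Gamma)$ contains all the clauses, then no hypothesis $p$ can possibly pass the check. In the SAT direction, Proposition \ref{sat_implies_l} produces a witness $l^* \in \Lambda(\Gamma)$ with $C \subseteq l^*$, and one extends $c_1, \ldots, c_m$ to a positive presentation of $l^*$; an efficient learner of $l^*$ must converge to some $p^*$ with $L(\Gamma_{p^*}) = l^*$ after only polynomially many steps, and in particular its hypothesis after the initial segment $c_1, \ldots, c_m$ must already assign those strings to its conjectured language (otherwise one can adversarially continue the presentation to force superpolynomially many mind-changes).

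The main obstacle I expect is pinning down the learning model precisely enough that this last ``consistency after any polynomial prefix'' step is actually forced by efficiency. The preceding section of the paper only formally defines efficiency relative to equivalence oracles, and the standard definitions of polynomial-time identification in the limit from text (polynomial convergence time, polynomial total update time, polynomial mind-change bound) differ in how strongly they constrain the learner's hypothesis on short prefixes. The cleanest resolution is to take ``efficient'' to mean that after polynomially many examples the hypothesis has stabilised on a correct setting; then the argument above goes through verbatim, since $c_1, \ldots, c_m$ is a polynomial-length prefix of a valid presentation of $l^*$ and the learner's output on it already gives a witness for satisfiability.
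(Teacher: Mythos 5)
Your proposal is correct and follows essentially the same route as the paper: both reduce 3SAT to PPCFG learning by feeding the clause set $C$ to the assumed efficient learner as (a prefix of) a positive presentation of some $l \in \Lambda(\Gamma)$ and invoking Propositions \ref{l_implies_sat} and \ref{sat_implies_l} for correctness. If anything your version is more careful than the paper's, which simply asserts that the learner ``outputs a parameter setting $p$ which produces every $c_i \in C$, or a signal indicating no such assignment exists''---your explicit CYK verification of the returned hypothesis and your discussion of what ``efficient learning from positive presentations'' must mean address exactly the gaps the paper's two-paragraph proof glosses over.
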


\begin{proof}
  Assume that there exists some algorithm $A$ which efficiently learns the
  PPCFGs from positive presentations. We will use $A$ to construct a SAT solver
  $S$ by simulating the oracle. Construct $\Gamma$ from the SAT instance as
  described above. Then $S$'s interaction with $A$ takes the following
  form:

  By assumption, after observing polynomially many positive presentations, and
  performing polynomially many computations, $A$ outputs a parameter setting $p$
  which produces every $c_i \in C$, or a signal indicating no such assignment
  exists. From Propositions \ref{l_implies_sat} and \ref{sat_implies_l}, such a $p$
  exists if and only if the SAT instance is satisfiable. Thus $S$ determines in
  a polynomial number of steps whether the SAT instance is satisfiable, and the
  existence of $A$ implies $\mathsf{P} = \mathsf{NP}$.
\end{proof}

\section{Cryptographic hardness results for PPCSGs}

We will construct another reduction, this time from integer factorization to
PPCSG learning. Let $N$ be a product of two $(n-1)$-digit primes.

Let $A$ be a set of non-terminal symbols $A_0..A_{\lceil \lg \sqrt{N} \rceil}$,
and $B, C, Z$ be similar sets of nonterminals of cardinality $\lceil \lg N
\rceil + 1$.
Then construct the PPCSG $\Gamma$ with 
\begin{align*}
  V &= A \cup B \cup C \cup Z \cup \{\mathtt{S}\} \\
  \Sigma &= \{ c_k \} && \forall k \in 0..\lceil \lg N \rceil \\
  S &= \mathtt{S}
\end{align*}
and $\Pi$ with the following production groups:
\begin{align*}
  &(\mathtt{S} \rightarrow A_{\lceil \lg \sqrt{N} \rceil} \mathtt{S}) \\
  &(\mathtt{S} \rightarrow \varepsilon) \\
  &(A_0 \rightarrow B_0), (A_0 \rightarrow \varepsilon) \\
  &(A_j \rightarrow A_{j - 1}),
  (A_j \rightarrow B_j A_{j - 1}) && \forall j \in 1..\lceil
  \lg \sqrt{N} \rceil \\
  &(B_j \rightarrow B_{j-1}B_{j-1}) && \forall j \in 1..\lceil \lg \sqrt{N} \rceil \\
  &(B_0 \rightarrow C_0) \\
  &(C_k C_k \rightarrow C_k Z_k) \\
  &(C_k Z_k \rightarrow C_{k+1} Z_k) && \forall k \in 0..\lceil \lg{N} \rceil
  \\
  &(C_{k+1} Z_k \rightarrow C_{k+1}) \\
  &(C_k \rightarrow c_k)\\
\end{align*}
 
Intuitively, the parameter settings
in this grammar $(A_j \rightarrow A_{j - 1}), (A_j \rightarrow B_j A_{j-1})$
fix some number $m$ between $1$ and $\sqrt{N}$. Each
$A_{\lceil\lg\sqrt{N}\rceil} \Rightarrow^* C^m$, so $S \Rightarrow^* C^{mk}$
for all $k$, i.e. the unary representation of all multiples of $m$. This unary
string may then be collapsed into a $\lceil \lg N \rceil$-ary representation as
a string of terminal $c_i$s.
 
Let $l$ be the language consisting of the single string $w$, where $w$ is the
concatenation of every $a_i$ such that the $i$th digit of the binary
representation of $N$ is 1.

Given a parameter setting $s$ for $\Gamma$, for each production group $(A_j
\rightarrow A_{j - 1}),$ $(A_j \rightarrow B_j A_{j - 1})$ in $s$, let $p_i = 0$
if the first setting is chosen and $1$ if the second setting is chosen. Let
$P_s$ be the number whose binary representation is given by the $p_i$s.
Alternatively, given a binary number $P$ let $s_P$ be the parameter setting
induced by $P$'s bits.

Finally, some notation: given a sequence of strings $S$, let $\bigparallel_{s
\in S} s_i$ denote the concatenation of all $s_i$s.
 
 \begin{prop}
   \label{ppcfg_factoring}
   Given numbers $P$ and $Q$, $P \leq Q$, if $PQ = N$ then $w \in
   L(\Gamma_{s_P})$.
 \end{prop}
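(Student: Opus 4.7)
I would decompose any derivation of $w$ in $L(\Gamma_{s_P})$ into three stages: (i) the $\mathtt{S}$ productions generate $Q$ copies of $A_m$, where I abbreviate $m = \lceil \lg \sqrt{N}\rceil$; (ii) each copy of $A_m$, under the parameter setting $s_P$, unfolds to $C_0^P$, so the overall sentential form becomes $C_0^{PQ} = C_0^N$; (iii) the context-sensitive rules in the $C$/$Z$ block carry this length-$N$ unary tally into the binary representation of $N$ and emit terminals.

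Stage (i) is immediate: $\mathtt{S}\to A_m\mathtt{S}$ and $\mathtt{S}\to\varepsilon$ each sit in their own singleton production group, so both are always available; applying the former $Q$ times followed by the latter yields $A_m^Q$. For stage (ii), I would prove by induction on $j$ that under $s_P$, $A_j \Rightarrow^* B_0^{(P \bmod 2^{j+1})}$, with the convention that $B_0^{0}$ denotes $\varepsilon$. The base case $j=0$ uses $A_0\to B_0$ or $A_0\to\varepsilon$ according to bit $0$ of $P$; the inductive step uses $A_j\to A_{j-1}$ when bit $j$ of $P$ is $0$ and $A_j\to B_j A_{j-1}$ when bit $j$ is $1$, and in the latter case invokes a short side-induction on the deterministic rule $B_j\to B_{j-1}B_{j-1}$ to conclude $B_j\Rightarrow^* B_0^{2^j}$. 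Applying $B_0\to C_0$ then gives $A_m\Rightarrow^* C_0^P$, and doing this inside each of the $Q$ copies assembles $C_0^N$.

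Stage (iii) is the substantive step. I would first bundle the three rules $C_k C_k\to C_k Z_k$, $C_k Z_k\to C_{k+1}Z_k$, and $C_{k+1}Z_k\to C_{k+1}$ into a derived macro $C_k C_k\Rightarrow^* C_{k+1}$ which silently creates and then consumes a $Z_k$. Then, by induction on $k$, I would rewrite $C_0^N$ by performing every level-$k$ merge before any level-$(k+1)$ merge and proceeding left to right within each level. The invariant is that after level $k$ completes, the sentential form has shape $C_k^{a_k} C_{k-1}^{b_{k-1}}\cdots C_0^{b_0}$ with $a_k = \lfloor N/2^k\rfloor$ and $b_\ell$ the $\ell$th binary digit of $N$. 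Iterating up to the top bit leaves $C_{j_1}\cdots C_{j_r}$ with $j_1 > \cdots > j_r$ the $1$-bit positions of $N$ in descending order, and a final sweep of $C_k\to c_k$ produces $w$.

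The main obstacle is (iii): the context-sensitive rules are genuinely positional, and every macro invocation introduces an intermediate $Z_k$ that can only be consumed by an immediately preceding $C_{k+1}$. I need to verify that the left-to-right, bottom-up schedule above never strands a $Z_k$ between mismatched symbols. Fortunately, at each macro call the new $Z_k$ is sandwiched between the freshly created $C_{k+1}$ on its left and the as-yet-unmerged lower-level symbols on its right, so $C_{k+1}Z_k\to C_{k+1}$ always fires immediately; once this scheduling invariant is formalised, the rest of the verification is routine bookkeeping against the binary expansion of $N$.
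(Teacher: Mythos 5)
Your proof is correct and follows essentially the same derivation as the paper's: $\mathtt{S} \Rightarrow^* A_{\lceil \lg \sqrt{N}\rceil}^Q \Rightarrow^* B_0^{PQ} = B_0^N \Rightarrow^* C_0^N \Rightarrow^* w$, selecting the $A_j$ productions according to the bits of $P$. The only difference is one of detail: the paper leaves the final step $C_0^N \Rightarrow^* w$ entirely implicit, whereas you correctly spell out the merge macro $C_kC_k \Rightarrow^* C_{k+1}$ and the bottom-up, left-to-right schedule that realizes the binary expansion of $N$, so your version is a strict refinement of the same argument.
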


\begin{proof}
  In $\Gamma_{s_T}$,
  \begin{align*}
    \mathtt{S} &\Rightarrow^* S^Q \\
    &\Rightarrow^* (A_{\lceil \lg \sqrt{N} \rceil})^Q \\
    &\Rightarrow^* \left( \bigparallel_{\substack{0 \leq i \leq \lceil \lg \sqrt{N}
    \rceil \\
    T_i = 1}} B_i \right)^Q \\
    &\Rightarrow^* B_0^{PQ} = B_0^N \\
    &\Rightarrow^* C_0^N \\
    &\Rightarrow^* w
  \end{align*}
\end{proof}

\begin{prop}
  \label{ppcfg_factoring_2}
  If $w \in L(\Gamma_{s})$, then there exists $Q$ such that $P_s Q=N$.
\end{prop}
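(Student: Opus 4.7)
The plan is to trace any derivation of $w$ in $\Gamma_s$ from $\mathtt{S}$ downward and show that the only quantity conserved through the derivation forces $P_s$ to divide $N$.

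First, I would argue that every derivation in $\Gamma_s$ must begin by applying $\mathtt{S} \to A_{\lceil \lg \sqrt{N} \rceil} \mathtt{S}$ exactly $Q$ times for some $Q \geq 0$, followed by $\mathtt{S} \to \varepsilon$, yielding the intermediate sentential form $A_{\lceil \lg \sqrt{N} \rceil}^Q$. These are the only productions whose left-hand side is $\mathtt{S}$, and $\mathtt{S}$ does not appear on the right of any other rule, so this factorization is forced.

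Second, once the parameter setting $s$ is fixed, each of the $A_j$ and $A_0$ production groups contributes exactly one rule, and the $B_j$ productions have no alternatives. Hence $A_{\lceil \lg \sqrt{N} \rceil}$ deterministically rewrites to the concatenation of those $B_i$'s for which $p_i = 1$, each $B_j$ deterministically down to $B_0^{2^j}$, and each $B_0$ to $C_0$. Thus $A_{\lceil \lg \sqrt{N} \rceil}^Q \Rightarrow^* C_0^{P_s Q}$. Writing $M = P_s Q$, the problem reduces to showing that any terminal string $u$ derived from $C_0^M$ using the carry rules and $C_k \to c_k$ satisfies $\sum_k n_k(u) \cdot 2^k = M$, where $n_k(u)$ counts the terminals $c_k$ in $u$; since $w$ is defined so that this same quantity equals $N$, this forces $P_s Q = N$.

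Third, I would establish the required conservation law for the context-sensitive subsystem. The symbols $Z_k$ have no terminal production and appear in no rule outside the three carry productions, so every $Z_k$ introduced by $(C_k C_k \to C_k Z_k)$ must ultimately be consumed by $(C_{k+1} Z_k \to C_{k+1})$. Pairing these applications and tracking their net effect, each fully-completed carry cycle converts $C_k C_k$ into $C_{k+1}$ and thus preserves the weighted count $\sum_k (\text{number of } C_k \text{ or } c_k) \cdot 2^k$, because $2 \cdot 2^k = 2^{k+1}$, while $C_k \to c_k$ trivially preserves it. An induction on the length of the derivation then shows that the binary value of any terminal string derived from $C_0^M$ equals $M$.

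The main obstacle will be making this pairing argument rigorous in the face of interleaved carry operations at different levels: one must rule out ``orphaned'' intermediate steps and ensure that the adjacency constraints of the context-sensitive rules do not permit a single $Z_k$ to participate in more than one cycle, or the rule $(C_k Z_k \to C_{k+1} Z_k)$ to be applied in a configuration whose $Z_k$ belongs to a different carry. A careful bookkeeping argument on the multiset of nonterminals in each sentential form, or alternatively a direct characterization of the language derivable from $C_0^M$ by these rules as exactly the set of $c_k$-strings whose binary value is $M$, should close this gap.
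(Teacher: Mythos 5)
Your proposal follows essentially the same route as the paper's proof: the grammar forces any derivation of a terminal string to pass through $C_0^{P_s Q}$ for some $Q$, and the carry subsystem preserves the binary value down to the terminal string, so $P_s Q = N$. The paper simply asserts the key step (that $w \in L(\Gamma_s)$ implies $C_0^{N} \Rightarrow^* w$) without justification, whereas you correctly isolate the conservation law for the $C_k$/$Z_k$ rules as the substantive point requiring proof and sketch a workable way to establish it, so your version is, if anything, more complete than the paper's.
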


\begin{proof}
  Certainly if $w \in L(\Gamma_{s})$, $C_0^{N} \Rightarrow^* w$. But $\mathtt{S}
  \Rightarrow^* C_0^{P_s k}$ for all $k$ (using the derivation in
  Proposition~\ref{ppcfg_factoring}); then there exists some $Q$ such
  that $PQ = N$.
\end{proof}
 
 \begin{thm}
   If integer factorization is hard, no efficient algorithm exists for learning
   random PPCSGs with non-negligible probability from an equivalence oracle.
 \end{thm}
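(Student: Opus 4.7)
The plan is to reduce integer factoring to efficient PPCSG learning, obtaining the result by contrapositive. Given a random product $N = pq$ of two $(n-1)$-digit primes (with $p \leq q$), construct the PPCSG $\Gamma$ exactly as in the paragraphs above. Propositions~\ref{ppcfg_factoring} and~\ref{ppcfg_factoring_2} together say that the witness string $w$ lies in $L(\Gamma_s)$ precisely when $P_s$ divides $N$; combined with the fact that $P_s$ is bounded by roughly $\sqrt{N}$ by construction, the only possible \emph{nontrivial} divisor realizable as some $P_s$ is $p$ itself. Hence any parameter setting $s$ with $w \in L(\Gamma_s)$ and $P_s > 1$ immediately yields the factor $p$, and the task reduces to coaxing the learner $A$ into producing such an $s$.

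I would run $A$ on $\Gamma$, implicitly treating the target language as $l^{*} = L(\Gamma_{s_p})$. Since we do not know $p$, each equivalence query $\mathrm{EQ}(\Gamma')$ must be answered using only $N$ and the explicit parameter setting of $\Gamma'$. Read off $P_{s'}$ from $\Gamma'$. If $P_{s'}$ fails to divide $N$, return $w$ as a counterexample; by Proposition~\ref{ppcfg_factoring} we have $w \in l^{*}$, and by Proposition~\ref{ppcfg_factoring_2} we have $w \notin L(\Gamma')$. If $P_{s'} = 1$, return the one-symbol string $c_0$ (the binary encoding of $1$); since $p > 1$, this string lies in $L(\Gamma_{s_1}) \setminus l^{*}$. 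Otherwise $P_{s'}$ is a nontrivial divisor of $N$ no larger than $\sqrt{N}$, so necessarily $P_{s'} = p$: answer $\mathrm{TRUE}$, output $P_{s'}$, and halt. Symmetrically, if $A$ ever terminates on its own by returning a parameter setting $s$, we simply read $P_s$; correctness of $A$ forces $L(\Gamma_s) = l^{*}$, which by the two propositions together with the uniqueness of the map $P_s \mapsto L(\Gamma_s)$ means $P_s = p$.

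All of these oracle responses are mutually consistent with the fixed target $l^{*}$, so $A$'s interaction with the simulated oracle is indistinguishable from its interaction with a genuine equivalence oracle for $l^{*} \in \Lambda(\Gamma)$. Since the construction from a random $N = pq$ induces exactly the ``random PPCSG'' distribution envisaged by the theorem, any polynomial-time $A$ succeeding with non-negligible probability would yield a polynomial-time factoring algorithm succeeding on a non-negligible fraction of semiprimes, contradicting the assumed hardness of factoring. The main technical obstacle I expect to encounter is the consistency check: one must verify both that distinct allowed values of $P_s$ really do induce distinct languages $L(\Gamma_s)$ (so that the sequence of counterexamples we return commits us uniquely to $l^{*}$), and that the specific witnesses $w$ and $c_0$ are always correctly classified; both reduce to a careful analysis of the $C_k / Z_k$ collapse rules, which should generate exactly those terminal strings whose weighted value is a multiple of $P_s$.
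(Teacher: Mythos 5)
Your reduction is the same one the paper uses: build $\Gamma$ from $N$, invoke Propositions~\ref{ppcfg_factoring} and~\ref{ppcfg_factoring_2} to identify ``$w \in L(\Gamma_s)$'' with ``$P_s$ divides $N$,'' and extract a factor from whatever parameter setting the learner commits to. Where you differ is in completeness rather than in route. The paper's proof is a three-line sketch that hands the learner the positive presentation $\{w\}$ and stops at ``$PQ = N$''; it never simulates the equivalence oracle that its own theorem statement advertises, and it silently ignores the trivial divisors (a learner returning $P_s = 1$ satisfies $w \in L(\Gamma_{s})$ yet yields no factorization). You repair both: you answer equivalence queries using only $N$ and the queried parameter setting, and you eliminate the $P_{s'} = 1$ case with the counterexample $c_0$. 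Two caveats: first, your claim that $p$ is the \emph{only} realizable nontrivial divisor is slightly off ($q$ may also fit in $\lceil \lg \sqrt{N} \rceil + 1$ bits since the primes are only balanced to within a decimal digit), but this is harmless since either factor suffices. Second, the obstacle you flag at the end is real and is \emph{not} discharged by the paper either: the $c_0$ counterexample and the consistency of your simulated oracle both need the converse language characterization ($L(\Gamma_s)$ contains \emph{only} encodings of multiples of $P_s$), which the stated propositions---concerning only the single string $w$---do not provide. So your argument, once that characterization is verified, is strictly more rigorous than the one in the paper.
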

 
 \begin{proof}
   Assume that there exists some algorithm $A$ which, given $\Gamma$ and the
   positive presentation of the single string $w$ as specified above, outputs a
   parameter setting $P$ for $\Gamma$ such that $w \in L(\Gamma_P)$ with
   non-negligible probability a polynomial number of computations. Then we will
   construct a factorizer $F$ that decomposes $N$ into $P$ and $Q$.
   
   From the preceding conjectures, if an acceptable $P$ is found then $PQ = N$,
   for some $Q$, so if $A$ can find a parameter setting in polynomial time then
   this algorithm finds a factorization in polynomial time.
 \end{proof}
 
 This final proof is neither particularly interesting or satisfying: even the
 task of finding a derivation in a CSG is known to be {\sf PSPACE}-complete
 (though it's easy to see that a polynomial-time parsing algorithm for this
 particular family of grammars exists). Note that the only context-sensitive
 production groups employed in this production are used to guarantee a compact
 encoding of $w$; we suspect that there is an alternative way of constructing
 this ``grammar arithmetic'' that requires only weaker rules, perhaps mildly
 context-sensitive or even context-free. We thus close with the following:
 \begin{op}
   If integer factorization is hard, does there exist a polynomial-time
   algorithm for learning random PPCFGs with non-negligible probability from
   positive presentations?
 \end{op}

\section{Conclusion}

We have introduced a new model, the princpled parametric context-free (also
context-sensitive) grammars as a model of the ``Principles and Parameters''
model in psycholinguistics, and presented three hardness-of-learning results for
the class of PPCFGs and PPCSGs. While these results certainly do not demonstrate
definitively that learning under the Principles and Parameters framework is
completely impossible (all that is required for human language learning to be
possible is that one PPCFG be efficiently learnable), we have shown that there
is likely no generic algorithm for learning a class of PPCFGs given either
oracle and membership queries or a positive presentation.  In general, these
results prove that even radically restricting the class of candidate grammars
does not guarantee a successful outcome when attempting to learn CFGs and CSGs.

\end{document}